\newtheorem{thm}{Theorem}[section]
\newtheorem{lem}[thm]{Lemma}
\theoremstyle{definition}
\newtheorem{rem}[thm]{Remark}
\numberwithin{equation}{section}  
\begin{document}

\title{Absence of a Ground State for Bosonic Coulomb Systems with Critical Charge}

\author{Yukimi Goto}
\address{Graduate School of Mathematical Sciences,
The University of Tokyo, 3-8-1 Komaba Meguro-ku Tokyo 153-8914, Japan}
\email{yukimi@ms.u-tokyo.ac.jp}

\keywords{Many-body Schr{\"o}dinger operator; Coulomb system; 
Bosonic molecules; threshold eigenvalue.}

\begin{abstract}
We consider bosonic Coulomb systems with $N$-particles and $K$ static nuclei.
Let $E_N^Z$ denote the ground state energy of a bosonic molecule of the total nuclear charge $Z$. 
We prove that the system has no normalizable ground state when $E_{N}^{N-1} = E_{N-1}^{N-1}$.
\end{abstract}

\maketitle

\section{Introduction}
We consider a molecule consisting of
$N$-particles and $K$ fixed nuclei with positive charges $z_1, \dots, z_K > 0$ 
located at distinct positions $R_1, \dots, R_K \in \mathbb{R}^{3}$.
Let $Z = \sum_{i=1}^K z_i$ be the total nuclear charge. 
This system is described by the Hamiltonian
\[
H_N^Z = \sum_{i=1}^N \left(-\Delta_i - 
\sum_{j=1}^K z_j|x_i - R_j|^{-1}\right) + \sum_{1 \le i < j \le N} |x_i - x_j|^{-1}
+ \sum_{1 \le i < j \le K}z_iz_j|R_i - R_j|^{-1}
\]
acting on $L^2(\mathbb{R}^{3N})$. 
Here  $x_i \in \mathbb{R}^3$ and $\Delta_i$ are, respectively, 
the particle coordinates
and the three-dimensional Laplacian with respect to the coordinate $x_i$.
It is well-known that $H_N^Z$ is a self-adjoint operator with 
the domain $H^2(\mathbb{R}^{3N})$ and bounded from below.
The ground state energy is defined by
\[
E_N^Z = \inf \mathrm{spec} \, 
H_N^Z = 
\inf \left\{ \left\langle\psi, H_N^Z\psi \right\rangle \, :  \, \psi \in H^1(\mathbb{R}^{3N}),
 \, \|\psi\|_2 = 1 \right\}
\]
and if it is an eigenvalue, the corresponding eigenfunction is called the ground state.
We impose no symmetry requirements on the particles for our study.
That actually the unrestricted ground state wave function is the same as the bosonic (totally symmetric) one is a conclusion of ~\cite[Section 3.2.4]{LS}.
It is always the case that $E_N^Z \le E_{N-1}^Z$
 (see, e.g.,~\cite{Lewin20113535}).
 
According to the HVZ theorem~(\cite{Lewin20113535}), 
$E_N^Z < E_{N-1}^Z$ implies that there exists a ground state eigenfunction of $H_N^Z$.
Zhislin~\cite{Zhislin} proved that  $E_N^Z < E_{N-1}^Z$ for $Z > N-1$ 
and it is also known that  the system is not bound if for a given $N$ the total nuclear charge becomes sufficiently small.
More precisely, the system has no ground state if  $N \ge 2Z + K$~(\cite{PhysRevA.29.3018}).
This implies instability of the di-anion $\mathrm{H}^{2-}$.
In the usual fermionic case, experimental~\cite{doi:10.1063/1.556047} and numerical~\cite{0953-4075-31-10-001, QUA:QUA18} evidence suggests that there are no stable atomic di-anions $\mathrm{X}^{2-}$,
i.e., fermionic atoms are not bound if $N > Z + 1$  (see,~\cite{LS} for further information).

In the critical case $Z=N-1$, it might happen either $E_N^{N-1} < E_{N-1}^{N-1}$ or $E_N^{N-1} =E_{N-1}^{N-1}$.
The latter case leads to absence of anions (e.g., presumably, $\mathrm{He}^{-}$, $\mathrm{Be}^{-}$, etc)
 or, otherwise,  existence of bound states having zero binding energy as well~\cite{0305-4470-16-6-007, doi:10.1142/S0129055X13500219, Gridnev}.
In this paper, we prove that the ground states of bosonic molecules are delocalized when $E_N^{N-1} =E_{N-1}^{N-1}$.

\begin{thm}
\label{main}
Suppose   $E_N^{N-1} =E_{N-1}^{N-1}$. 
Then there cannot be a normalizable ground state of $H_N^{N-1}$ in $L^2(\mathbb{R}^{3N})$.
\end{thm}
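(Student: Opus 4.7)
The plan is to suppose by contradiction that a normalized ground state $\psi \in L^2(\mathbb R^{3N})$ of $H_N^{N-1}$ exists, which by the Perron--Frobenius argument for bosonic Schr\"odinger operators I take strictly positive, and then to project onto the ground state of the $(N-1)$-particle subsystem in order to reduce to a one-body problem. Since $Z = N-1 > N-2$, Zhislin's theorem applied to $H_{N-1}^{N-1}$ yields a normalized positive ground state $\psi_0 \in L^2(\mathbb R^{3(N-1)})$ with eigenvalue $E_{N-1}^{N-1}$, and because $E_{N-1}^{N-1} < E_{N-2}^{N-1} = \inf \sigma_{\mathrm{ess}}(H_{N-1}^{N-1})$ (again by Zhislin), Agmon-type estimates give pointwise exponential decay of $\psi_0$ in all its variables.

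I introduce the reduced one-body function
\[
f(x) := \int_{\mathbb R^{3(N-1)}} \psi_0(y)\,\psi(y,x)\,dy, \qquad y = (y_1,\ldots,y_{N-1}).
\]
Cauchy--Schwarz gives $f \in L^2(\mathbb R^3)$, and the strict positivity of $\psi$ and $\psi_0$ gives $f > 0$ everywhere. Using the decomposition $H_N^{N-1} = H_{N-1}^{N-1}(y) + (-\Delta_x) + V_N(x,y)$ with $V_N(x,y) = -\sum_{j=1}^K z_j|x-R_j|^{-1} + \sum_{i=1}^{N-1}|y_i - x|^{-1}$, integrating the eigenvalue equation $(H_N^{N-1}-E_N^{N-1})\psi = 0$ against $\psi_0$ in the variables $y$, and exploiting self-adjointness of $H_{N-1}^{N-1}$ together with the critical-charge hypothesis $E_N^{N-1} = E_{N-1}^{N-1}$, yields the reduced one-body identity
\[
-\Delta f(x) + G(x) = 0, \qquad G(x) := \int \psi_0(y)\,V_N(x,y)\,\psi(y,x)\,dy.
\]

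The key asymptotic observation is the cancellation of the Coulomb tails. Multipole expansion gives $-\sum_j z_j|x-R_j|^{-1} = -Z|x|^{-1} + O(|x|^{-2})$ and $\sum_{i=1}^{N-1}|y_i - x|^{-1} = (N-1)|x|^{-1} + O(|y|/|x|^2)$ as $|x|\to\infty$; the exponential decay of $\psi_0$ controls the $y$-integration of the error terms, and the hypothesis $Z = N-1$ makes the two $|x|^{-1}$ contributions cancel. Thus $f > 0$ satisfies a one-body zero-energy Schr\"odinger equation
\[
(-\Delta + U)f = 0, \qquad U(x) = G(x)/f(x) = O(|x|^{-2}) \text{ as } |x|\to\infty,
\]
with short-range effective potential. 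A strictly positive $L^2(\mathbb R^3)$ solution of such an equation cannot exist: via comparison with the Newtonian Green's function on the exterior of a large ball, one obtains a lower bound $f(x) \ge c/|x|$ for $|x|$ large and some $c > 0$, in manifest contradiction with $\int |f|^2 < \infty$.

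The principal obstacle is this last step, namely extracting the sharp lower bound $f(x) \gtrsim 1/|x|$ from the inverse-square decay of $U$. Because $U$ need not have a definite sign and $|x|^{-2}$ is precisely the borderline scale for the Hardy inequality, this requires a careful comparison argument on annular regions (rather than a direct application of the minimum principle for superharmonic functions). The justification of the multipole asymptotic for $G$ uniformly in $y$, for which the exponential decay of $\psi_0$ is indispensable, is the other delicate ingredient.
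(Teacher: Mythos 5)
Your overall strategy---project onto the $(N-1)$-particle ground state, use $Z=N-1$ to cancel the Coulomb tails, and derive a lower bound $f\gtrsim 1/|x|$ contradicting square-integrability---follows the same skeleton as the paper's proof, but the step you yourself flag as ``the principal obstacle'' is not merely delicate: as stated it is false, and the paper's spherical-averaging device exists precisely to avoid it. A strictly positive $L^2(\mathbb{R}^3)$ solution of $(-\Delta+U)f=0$ with $U(x)=O(|x|^{-2})$ can perfectly well exist: take $f(x)=(1+|x|)^{-2}$, which is positive and square-integrable near infinity and satisfies $(-\Delta+U)f=0$ with $U=\Delta f/f=6(1+|x|)^{-2}-4|x|^{-1}(1+|x|)^{-1}\sim 2|x|^{-2}$. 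Inverse-square decay is exactly the borderline at which comparison with the Newtonian profile $c/|x|$ fails, so no refinement of the annular comparison argument can extract $f\ge c/|x|$ from $U=O(|x|^{-2})$ alone; what is needed is a \emph{sign}, namely $-\Delta f\ge 0$ outside a ball. Moreover, even the preliminary assertion $U=G/f=O(|x|^{-2})$ is circular: the multipole expansion of $\sum_i|y_i-x|^{-1}$ is valid only on $\{|y|\lesssim|x|\}$, and to absorb the contribution of the complementary region (where $\psi_0$ is exponentially small but $\psi(y,x)$ is uncontrolled) into $O(|x|^{-2})f(x)$ you would already need a lower bound on $f$ --- which is the thing to be proved; similarly $\int|y|\psi_0\psi\,dy$ is not dominated by a constant times $\int\psi_0\psi\,dy=f(x)$.

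The paper closes both gaps with one move: instead of $f(x)=\int\psi_0\psi\,dy$ it works with $\exp\left([\ln\psi]_{x_N}\right)$, the exponential of the spherical average of $\ln\psi$ in the angular variable of the last particle. Newton's theorem then evaluates the averaged potentials \emph{exactly}: $[|y-x_N|^{-1}]_{x_N}=\min(|y|^{-1},|x_N|^{-1})$, so for $|x_N|>\max_j|R_j|$ the nuclear attraction contributes exactly $-Z|x_N|^{-1}$ while the repulsion contributes at most $(N-1)|x_N|^{-1}$; with $Z=N-1$ the net one-body potential is $\le 0$ pointwise, with no error term and no division by $f$. The price is Lemma~\ref{calculus}, the inequality $[\Delta_i\psi/\psi]_{x_N}f\ge\Delta_i f$ (a Jensen/Cauchy--Schwarz estimate for the log-average), which guarantees the kinetic term also has the favorable sign after averaging. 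Pairing with $\phi\otimes\eta$ then shows that $v(x_N)=\int\phi f$ is genuinely superharmonic for $|x_N|>R$, the minimum principle legitimately yields $v\ge C_R/|x_N|$, and the contradiction with $f\le[\psi]_{x_N}\in L^2$ follows. Without some substitute for this exact one-sided cancellation, your argument does not close.
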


Thus, bosonic anions $\mathrm{X}^{-}$ fail to be stable in that case.
In nature, fermionic $\mathrm{He}^{-}$ anion ($N=3$, $Z=2$) is unstable as also numerical computations show \cite{0953-4075-41-2-025002} (but a virtual state can be expected as indicated by~\cite{1402-4896-58-1-004}).
On the other hand, Hogreve~\cite{doi:10.1007/s10955-011-0265-0} proved bosonic $\mathrm{He}^{-}$ can exist as a stable anion.

For the  atom with $N=2$ electrons, the anion $\mathrm{H}^{-}$ ($Z=1$) has a ground state
~(\cite{PhysRevA.10.1109, PhysRevLett.112.173001}).
Moreover, for bosonic atoms it is known that $E_N^Z < E_{N-1}^Z$ for all
$N \le N_c (Z) = 1.21Z + o(Z)$
as $Z \to \infty$, where $N_c (Z)$ is the maximum number of particles
 that can be bound to a nucleus of charge $Z$ (\cite{PhysRevLett.50.1771, Bach1991}).
 In particular, $E_N^{N-1} < E_{N-1}^{N-1}$ for $N$ sufficiently large.
As yet it is unknown if the situation $E_N^{N-1} =E_{N-1}^{N-1}$ is true for some $N$.

\begin{rem}
Let $Z_c > 0$ be a critical value such that for $Z > Z_c$ one has
$E_N^{Z} < E_{N-1}^{Z}$, and  $E_N^{Z_c} = E_{N-1}^{Z_c}$.
Clearly, $Z_c \le N-1$ by Zhislin's theorem.
Our theorem corresponds to the case $Z_c = N-1$.
In the atomic situation $K=1$, 
it was shown in~\cite{doi:10.1142/S0129055X13500219, Gridnev}, 
that $H_N^{Z_c}$ has a ground state
if $Z_c < N-1$ and $E_{N-1}^{Z_c} < E_{N-2}^{Z_c}$ (in~\cite{Gridnev},
if $Z_c \in (N-2, N-1)$).
Furthermore, these results are  also valid for the
fermionic case.
But our proof works only for the unconstrained (bosonic) case
because the positivity of the ground state is needed.
\end{rem}

\begin{rem}
It has been shown by Gridnev~\cite{Gridnev} that for certain combinations of charges the ground state in the three-particle Coulomb system is delocalized when the energy of the system equals the bottom of the continuous spectrum. 
\end{rem}

\begin{rem}
The symmetry properties of the wave functions may have an important role in the existence and absence of the bound states at threshold.
For instance, we consider the two-particle Hamiltonian $H(Z) = - \Delta_x + ZV(|x|)$, where $V \in C_c^{\infty}(\mathbb{R}^3)$ and $V \le 0$, $Z > 0$ (the center of mass motion is removed).
There would exist $Z_c$ such that $\inf \mathrm{spec} \, H(Z) \nearrow 0$ as $Z \searrow Z_c$. If the particles are bosons or boltzons (without any symmetry restrictions) then we can repeat the superharmonic argument in the proof of Theorem \ref{main}, and thus the ground state of $H(Z_c)$ is not localized.
If the particles are spatially anti-symmetric ($\psi(x) = -\psi(-x)$) then the ground state is bound and localized (\cite[Theorem 2.4]{AnnPhys.130.251}).

In the three particle problem without particle statistics and with essential spectrum starting at zero the ground states are delocalized only in special cases. We refer to~\cite{Gridnev2}.
\end{rem}

\section{Proof of Theorem \ref{main}}
Although, as already explained in the introduction, the unconstrained ground state is the same as the bosonic one, the particle symmetry will be never used in any of the proofs.
 Our method is similar to  that of M. and T. Hoffman-Ostenhoff~\cite{0305-4470-17-17-009}.
 They showed that the Hamiltonian of the two-particle atom
 $H_2^{Z_c}$ has no ground state in the triplet S-sector at the threshold $Z_c = 1$. 
The triplet S-sector means that
the admissible functions  in $L^2(\mathbb{R}^6)$ are restricted to
$\psi(x_1, x_2) = - \psi(x_2, x_1)$ and there exists a $\phi \in L^2(\mathbb{R}_+\times \mathbb{R}_+\times \mathbb{R})$ with
$\psi(x_1, x_2) = \phi(|x_1|, |x_2|, x_1 \cdot x_2)$.
 
  Our proof is simpler than the one in~\cite{0305-4470-17-17-009}, 
  and we treat the molecule of arbitrary $N$-particles,
  but without the anti-symmetry assumption (the Pauli exclusion principle).

Our proof proceeds by constructing a contradiction.
We suppose that there is a $\psi \in H^1(\mathbb{R}^{3N})$ 
so that $H_N^Z \psi = E_N^Z \psi$ and $\psi \neq 0$ for $Z = Z_c = N - 1$.
 The ground state eigenfunction $\psi$ is automatically in $H^2(\mathbb{R}^{3N})$,
 and we may assume that
 $\psi \ge 0$ (\cite[Theorem 7.8]{LiLo}).
Then, by the results of Aizenman and Simon~\cite[Theorem 3.10 \& 1.5]{CPA3160350206} ,
  $\psi$ is strictly positive and continuous.

For any function $g : \mathbb{R}^{3N} \to \mathbb{R}$
let
 \[
 [g]_{x_N}(x_1, \dots, x_{N-1}, r_N)
 =  \int_{\mathbb{S}^2} g(x_1, \dots, x_N)\frac{d\omega_N}{4\pi}
 \]
 be the spherical average of $g$ with respect to the variable $x_N = (r_N, \omega_N)$,
where $d \omega_N$ is the spherical measure on $\mathbb{S}^{2}$,
 which is the unit sphere in $\mathbb{R}^3$.
We note that $\left\| [g]_{x_N} \right\|_2 \le \|g\|_2$ for any $g$ by the Cauchy-Schwarz inequality.

The next lemma is a slight modification of~\cite[Lemma 7.17]{RevModPhys.53.603}.
 
 \begin{lem}
 \label{calculus}
Let $\psi$ be a strictly positive, continuous function in $H^2(\mathbb{R}^{3N})$
and set
$f(x_1, \dots, x_N) = \exp \left([\ln \psi]_{x_N} (x_1, \dots, x_{N-1}, r_N)\right)$.
Then $f > 0$, $f \in C(\mathbb{R}^{3N})$, and
\[
\left[ \frac{\Delta_i \psi}{\psi}\right]_{x_N}f
\ge \Delta_i f, \quad 
\]
$i = 1, \dots, N$, in the sense of distributions.
Moreover, the weak derivatives obey
$\nabla_i f \in L^2_{\mathrm{loc}}(\mathbb{R}^{3N})$ and
$\Delta_i f \in L^1_{\mathrm{loc}}(\mathbb{R}^{3N})$, for $i=1, \dots, N$.

 \end{lem}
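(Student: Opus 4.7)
The plan is to first establish the inequality when $\psi$ is smooth, where it follows from two identities about the spherical mean $[\,\cdot\,]_{x_N}$ and a Jensen inequality, and then to recover the general case by mollification. Setting $u := \ln\psi$, which is continuous and locally bounded because $\psi>0$ is continuous, the local boundedness of $1/\psi$ gives $\nabla_i u = \nabla_i\psi/\psi \in L^2_{\mathrm{loc}}$, and the distributional identity
\[
\frac{\Delta_i\psi}{\psi} = \Delta_i u + |\nabla_i u|^2
\]
places $\Delta_i u \in L^1_{\mathrm{loc}}$. Define $v(x_1,\dots,x_{N-1},r_N) := [u]_{x_N}$, which is continuous, and $f := e^v$, continuous and strictly positive. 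The asserted distributional inequality $[\Delta_i\psi/\psi]_{x_N}\,f \ge \Delta_i f$ unfolds to
\[
\int f\,\Delta_i\phi\,dx \le \int [\Delta_i\psi/\psi]_{x_N}\,f\,\phi\,dx\qquad\forall\,0\le \phi\in C_c^\infty(\mathbb{R}^{3N}).
\]

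In the smooth case two facts do all the work. First, $[\,\cdot\,]_{x_N}$ commutes with $\nabla_i$ and $\Delta_i$ when $i<N$ by Fubini, while for $i=N$ the classical radial-mean identity (obtained by applying Green's formula on balls $B_r(0)\subset\mathbb{R}^3$ in the variable $x_N$) gives
\[
\Delta_N [u]_{x_N}(r_N) = [\Delta_N u]_{x_N}(r_N),
\]
where the left side is the radial Laplacian in $r_N$. Second, the Cauchy--Schwarz (Jensen) inequality yields $|\nabla_i v|^2 \le [|\nabla_i u|^2]_{x_N}$; for $i=N$ one uses that $v$ is radial in $x_N$, so
\[
|\nabla_N v|^2 = (\partial_{r_N}v)^2 = ([\partial_{r_N} u]_{x_N})^2 \le [(\partial_{r_N} u)^2]_{x_N} \le [|\nabla_N u|^2]_{x_N}.
\]
Combining these with the chain rule $\Delta_i f = f(\Delta_i v + |\nabla_i v|^2)$, everything telescopes to
\[
\left[\frac{\Delta_i\psi}{\psi}\right]_{x_N} f - \Delta_i f = f\bigl([|\nabla_i u|^2]_{x_N} - |\nabla_i v|^2\bigr) \ge 0.
\]

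To descend to the $H^2$ regularity of the lemma I would set $\psi_\epsilon := \rho_\epsilon * \psi$; on any prescribed compact set $\psi_\epsilon>0$ uniformly for $\epsilon$ small, so the smooth-case inequality applies to $f_\epsilon := \exp([\ln\psi_\epsilon]_{x_N})$. Testing against $\phi\ge 0$ and sending $\epsilon\to 0$, one uses $f_\epsilon \to f$ locally uniformly (from $u_\epsilon \to u$ locally uniformly), $\nabla_i\psi_\epsilon \to \nabla_i\psi$ and $\Delta_i\psi_\epsilon \to \Delta_i\psi$ in $L^2_{\mathrm{loc}}$, and the contractivity of $[\,\cdot\,]_{x_N}$ on $L^1_{\mathrm{loc}}$ to conclude convergence on both sides of the tested inequality. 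The auxiliary claims $\nabla_i f \in L^2_{\mathrm{loc}}$ and $\Delta_i f \in L^1_{\mathrm{loc}}$ read off the chain-rule formula for $f$ together with $\nabla_i v \in L^2_{\mathrm{loc}}$ and $\Delta_i v \in L^1_{\mathrm{loc}}$ inherited from the corresponding bounds on $u$. The main technical obstacle I expect is controlling the nonlinear term $|\nabla_i u_\epsilon|^2 = |\nabla_i\psi_\epsilon|^2/\psi_\epsilon^2$ as $\epsilon\to 0$, which requires the strong $L^2_{\mathrm{loc}}$ convergence $\nabla_i\psi_\epsilon\to\nabla_i\psi$ together with the locally uniform positivity of $\psi_\epsilon$; a secondary bookkeeping point is justifying the radial-mean Laplacian identity at the low regularity of $u$, which again reduces by the same approximation to the smooth case.
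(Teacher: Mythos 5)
Your proposal is correct and follows essentially the same route as the paper: mollify $\psi$, apply the chain rule and the radial-mean identity $\Delta_N[\,\cdot\,]_{x_N}=[\Delta_N\,\cdot\,]_{x_N}$ to the smooth approximants, discard the gradient-squared excess via the Cauchy--Schwarz/Jensen inequality $|\nabla_i[\ln\psi_\varepsilon]_{x_N}|^2\le[|\nabla_i\psi_\varepsilon/\psi_\varepsilon|^2]_{x_N}$, and pass to the limit using $H^2$ convergence together with locally uniform positivity of $\psi_\varepsilon$. The technical obstacle you flag (the quadratic term $|\nabla_i\psi_\varepsilon|^2/\psi_\varepsilon^2$) is exactly the point the paper handles with its explicit $L^1(K)$ convergence estimates, so nothing essential is missing.
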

 
 \begin{proof}
 Note that $f \le [\psi]_{x_N} \in L^2(\mathbb{R}^{3N})$ 
 from Jensen's inequality~(\cite[Theorem 2.2]{LiLo}). 
Let $\rho_{\varepsilon}$ be a mollifier in $\mathbb{R}^{3N}$,
namely, 
$\rho_{\varepsilon}(x) 
= \varepsilon^{-{3N}} \rho (\varepsilon^{-1} x$), for $\varepsilon > 0$, where $\rho$ is a function in
  $ C_c^{\infty}(\mathbb{R}^{3N})$
 (infinitely differentiable functions with compact support)
 satisfying $\rho \ge 0$, $\int \rho = 1 $, $\rho (x) = 0$ for $|x| \ge 1$.
 Then  $\psi_{\varepsilon} = \rho_{\varepsilon} \ast \psi$ is strictly positive,
smooth and  satisfies
 that $\psi_{\varepsilon} \to \psi$ in $H^2(\mathbb{R}^{3N})$ (\cite[Thoerem 2.16 \& Lemma 6.8]{LiLo})
and $\psi_{\varepsilon} \to \psi$ uniformly on any compact set as $\varepsilon \to 0$,
where the symbol $\ast$ denotes the convolution of two functions.
We introduce a strictly positive, continuous function defined by  $f_{\varepsilon} = \exp \left( \left[ \ln \psi_{\varepsilon} \right]_{x_N} \right)$.
It is easy to see that $f_{\varepsilon}$ also converges to $f$ uniformly on any compact set,
since, by the mean value theorem of calculus, 
 $|f_{\varepsilon} - f| \le \mathrm{const.} \left[ \left|\psi_{\varepsilon} - \psi \right| \right]_{x_N}$
 on any compact set.
By direct computation, we see that
 \begin{equation}
 \label{0}
 \Delta_i f_{\varepsilon} =
 \left( \Delta_i [ \ln \psi_{\varepsilon}]_{x_N}\right) f_{\varepsilon} +
  \left| \nabla_i [\ln \psi_{\varepsilon}]_{x_N}\right|^2 f_{\varepsilon}
 \end{equation}
 and
 $\Delta_i [\ln \psi_{\varepsilon}]_{x_N} = [\Delta_i \ln \psi_{\varepsilon}]_{x_N} $ for $i = 1,\dots, N$, since $[\Delta_N g]_{x_N} =
 [\left(\partial^2/{\partial r_N^2} + 2/{r_{N}}\partial/{\partial r_N}\right)g]_{x_N}$ 
for $g \in C^2(\mathbb{R}^3)$.
 
 As $\varepsilon \to 0$, $\Delta_i f_{\varepsilon}$ converges to
 \begin{equation}
\label{1}
 F_i =
 \left( \left[ \frac{\Delta_i \psi}{\psi} \right]_{x_N}
 - \left[ \left|\frac{\nabla_i \psi}{\psi} \right|^2 \right]_{x_N}
 + \left| \nabla_i \left[ \ln \psi\right]_{x_N} \right|^2\right)f,
 \end{equation}
$i=1, \dots, N$, in $L^1(K)$ for any compact set $K \subset \mathbb{R}^{3N}$.
 In fact, for any compact set $K$, we may choose a constant $C > 0$ such that $\psi$, $\psi_{\varepsilon} \ge C$ on $K$ by the positivity and continuity of $\psi$. Since $|a_{\varepsilon}b_{\varepsilon} - ab| =
|(a_{\varepsilon} - a)b_{\varepsilon} + a(b_{\varepsilon}-b)|$,
a simple calculation shows that
 \begin{align*}
  \left| \left[ \frac{\Delta_i \psi_{\varepsilon}}{\psi_{\varepsilon}} \right]_{x_N}f_{\varepsilon}
 -  \left[ \frac{\Delta_i \psi}{\psi} \right]_{x_N}f \right|
&=   \left| \left[ \frac{\psi \Delta_i (\psi_{\varepsilon} - \psi) + (\psi - \psi_{\varepsilon})\Delta_i\psi}{\psi_{\varepsilon} \psi} \right]_{x_N}f_{\varepsilon} 
 +  \left[ \frac{\Delta_i \psi}{\psi} \right]_{x_N}(f_{\varepsilon} - f) \right| \\
&\le \mathrm{const.} 
\left(
  \left[ 
    \left|
      \Delta_i 
       \left( \psi_{\varepsilon} -  \psi 
       \right)
    \right| 
  +  \left|
        \psi - \psi_{\varepsilon}
      \right|
   \left| \Delta_i \psi 
   \right| 
 \right]_{x_N} +
  \left| f_{\varepsilon} -f \right| \left[ \left| \Delta_i \psi \right|\right]_{x_N}\right)
  \end{align*}
  and, by using the inequality $\left||a_{\varepsilon}|^2 - |a|^2\right| \le (|a_{\varepsilon}| + |a|)|a_{\varepsilon}-a|$,
 \begin{align*}
 \left|
\left[ \left|\frac{\nabla_i \psi_{\varepsilon}}{\psi_{\varepsilon}} \right|^2 \right]_{x_N} f_{\varepsilon} 
- \left[ \left|\frac{\nabla_i \psi}{\psi} \right|^2 \right]_{x_N}f \right|
\le 
\mathrm{const.}& \big(
 \left[ \left|\nabla_i \psi_{\varepsilon} \right|^2 \right]_{x_N}
 |f_{\varepsilon} - f| 
 + \left[(|\nabla_i \psi_{\varepsilon}|  + |\nabla_{i}\psi|)  |\nabla_{i}\psi|
 |\psi_{\varepsilon} - \psi|\right]_{x_N}  \\
&+ [ (\left|\nabla_i \psi_{\varepsilon} - \nabla_i \psi \right| 
(|\nabla_i \psi_{\varepsilon}| + |\nabla_i \psi| )]_{x_N} 
\big) 
\end{align*}
on $K$. 
These bound and a similar calculation for the third term,
together with 
$f_\varepsilon \to f$ and $\psi_{\varepsilon} \to \psi$
 uniformly on any compact set and 
$\psi_{\varepsilon} \to \psi$ in $H^2(\mathbb{R}^{3N})$,
  lead to the desired result (\ref{1}).
 
 Therefore, we obtain
 \begin{align*}
\int_{\mathbb{R}^{3N}}  f(x) \Delta_i \phi(x) dx
&= \lim_{\varepsilon \to 0}\int_{\mathbb{R}^{3N}}  f_{\varepsilon}(x) \Delta_i \phi(x) dx \\
 &=  \lim_{\varepsilon \to 0}\int_{\mathbb{R}^{3N}}  \phi(x) \Delta_if_{\varepsilon}(x) dx
= \int_{\mathbb{R}^{3N}}  \phi(x) F_i dx
 \end{align*}
 for any test function $\phi \in C_c^{\infty}(\mathbb{R}^{3N})$.
Consequently, we obtain
$F_i = \Delta_i f \in L^1_{\mathrm{loc}}(\mathbb{R}^{3N})$
as the weak derivative.
The proof of $\nabla_i f \in L^2_{\mathrm{loc}}(\mathbb{R}^{3N})$,
as the weak derivative, is essentially the same.

Finally, we note that
 $\left| \nabla_N [\ln \psi_{\varepsilon}]_{x_N} \right|^2 = \left[ \left(\partial \psi_{\varepsilon}/{\partial r_N}  \right) /{\psi_{\varepsilon}} \right]_{x_N} ^2
 \le \left[ \left|\left( \nabla_N \psi_{\varepsilon}\right) / \psi_{\varepsilon} \right|^2\right]_{x_N} $ 
and
 $\left| \nabla_i [\ln \psi_{\varepsilon}]_{x_N} \right|^2 
\le \left[ \left|\left(\nabla_i \psi_{\varepsilon}\right) / \psi_{\varepsilon} \right|^2\right]_{x_N} $
for $i= 1, \dots, N-1$,
  by the Cauchy-Schwarz inequality. 
  Then (\ref{0}) implies that 
  $\Delta_i f_{\varepsilon} \le [ (\Delta_i \psi_{\varepsilon}) / \psi_{\varepsilon}]_{x_N} f_{\varepsilon}$.
  Since $\Delta_i f_{\varepsilon}$ converges to $F_i = \Delta_i f$, this yields
 \begin{align*}
 \int_{\mathbb{R}^{3N}} \phi \left[\frac{\Delta_i \psi}{\psi}\right]_{x_N} f dx
&=\lim_{\varepsilon \to 0}  \int_{\mathbb{R}^{3N}} \phi \left[\frac{\Delta_i \psi_{\varepsilon}}{\psi_{\varepsilon}}\right]_{x_N} f_{\varepsilon} dx \\
&\ge 
 \lim_{\varepsilon \to 0} \int_{\mathbb{R}^{3N}} \phi \Delta_i f_{\varepsilon} dx\
 = \int_{\mathbb{R}^{3N}} \phi \Delta_i f dx, \quad
\end{align*}
$i = 1, \dots, N$, for every non-negative test function $\phi \in C_c^{\infty}(\mathbb{R}^{3N})$. 
 \end{proof}
 
Using the decomposition
\[
H_N^Z = H_{N-1}^Z - \Delta_N - \sum_{j=1}^K z_j |x_N - R_j|^{-1} + \sum_{i=1}^{N-1}|x_i - x_N|^{-1},
\]
 we observe that
 \begin{align*}
 0 &= \left[  \frac{(H_N^Z - E_N^Z)\psi}{\psi} \right]_{x_N} f \\
 &= \sum_{i=1}^{N-1} \left(- \left[\frac{\Delta_i \psi}{\psi}\right]_{x_N}  -
  \sum_{j=1}^K z_j |x_i - R_j|^{-1}\right)f
 + \sum_{1\le i < j\le N-1} |x_i - x_j|^{-1}f  \\
 & - \left[\frac{\Delta_N \psi}{\psi}\right]_{x_N} f
  - \sum_{j=1}^K z_j \left[|x_N - R_j|^{-1} \right]_{x_N} f
 + \sum_{i=1}^{N-1} \left[|x_i - x_N|^{-1}\right]_{x_N} f  -E_{N-1}^Zf.
 \end{align*}
  Here we have used the assumption $E_N^Z = E_{N-1}^Z$.
  Now, for $|x_N| > R =  \max_{1 \le j \le K} |R_j|$, our hypothesis $Z= N-1$ leads to the bound
\[
-\sum_{j=1}^K z_j \left[|x_N - R_j|^{-1}\right]_{x_N}  +\sum_{i=1}^{N-1}\left[ |x_i - x_N|^{-1} \right]_{x_N}  
\le 0
\]
since $\left[|y - x_N|^{-1}\right]_{x_N}  = \min(|y|^{-1}, |x_N|^{-1})$ for $y \in \mathbb{R}^3$
by Newton's theorem (followed from an integration in polar coordinates~\cite[Theorem 5.2]{LS}).
This is the crucial place where averaging over $x_N$ helps.

We recall Lemma \ref{calculus} to conclude that
\[
0 \le (H_{N-1}^Z - E_{N-1}^Z -\Delta_N)f
\]
on $|x_N| > R$.
In addition, by Zhislin's theorem and the HVZ theorem,
there is a normalized eigenfunction $\phi$ of $H_{N-1}^Z$
corresponding to the eigenvalue $E_{N-1}^Z$ .
 As in the case of $\psi$, we can assume that $\phi > 0$, etc.
Thus, there is a sequence
 $\phi_n \in C_c^{\infty} (\mathbb{R}^{3(N-1)})$  such that
  $\phi_n \to \phi$ in $H^2(\mathbb{R}^{3(N-1)})$ with
 $\phi_n \ge 0$.
For any non-negative function $\eta \in C_c^{\infty} (\mathbb{R}^{3})$
with the support in the set $|x| > R$,
we define a non-negative, smooth, and compactly supported function by
 $g_n(x_1, \dots, x_N) = \phi_n(x_1, \dots, x_{N-1}) \eta(x_N)$.

 Lemma \ref{calculus} now implies that 
 \[
 0 \le \left\langle g_n, \left(H_{N-1}^Z - E_{N-1}^Z  - \Delta_N\right)f \right\rangle
  = \left\langle \left(H_{N-1}^Z - E_{N-1}^Z  - \Delta_N\right)g_n, f \right\rangle.
  \]
 As $n \to \infty$ the right side converges to $\langle \left(H_{N-1}^Z - E_{N-1}^Z  - \Delta_N\right)g, f \rangle$,
 where $g = \lim_{n \to \infty} g_n$. 
Since $H_{N-1}^Z \phi = E_{N-1}^Z \phi$, it follows that
  \[
  0 \le \left\langle-  \Delta_N g, f \right\rangle_{L^2(\mathbb{R}^{3N})} = \left\langle  - \Delta \eta,  v \right\rangle_{L^2(\mathbb{R}^3)}
  \]
  for all $0 \le \eta \in C_c^{\infty} (\mathbb{R}^3)$ with the support in the set $\{ x \, :\,  |x| > R\}$,
  where 
  \[
  v(r_N) = \int_{\mathbb{R}^{3(N-1)}}  \phi(x_1, \dots, x_{N-1})
   f(x_1, \dots, x_{N-1}, r_N) dx_1 \cdots dx_{N-1}.
  \]
Hence, we find that $v$ is superharmonic, that is,
  \[
  0 \le - \Delta v
  \]
  on $|x_N| > R$, in distributional sense.
Also, $v_{\varepsilon} = h_{\varepsilon}\ast v$ is superharmonic in 
$|x_N| > R + \varepsilon$
for sufficiently small $\varepsilon > 0$, 
where $h_{\varepsilon}$ is some mollifier in $\mathbb{R}^3$. 
By the positivity of continuous functions $f$ and $\phi$, we can choose a constant $C_R > 0$  so that
  $v_{\varepsilon}(x) \ge C_R |x|^{-1}$  for all  $x \in \mathbb{R}^3$ with $|x| = R + 1$.
  Since  $u(x) = C_R |x|^{-1}$ is harmonic in $|x| > 0$,
  we arrive at
  \[
  - \Delta (u - v_{\varepsilon}) \le 0 \quad \text{in }|x| > R + 1,
  \]
  and by the strong maximum principle~(\cite[Theorem 9.4]{LiLo}),
we infer that $u - v_{\varepsilon}$ takes 
  its maximum value on 
$\{ x \in \mathbb{R}^3 \, : \, |x| = R + 1\} \cup \{+ \infty \}$.
  Hence,
  $\lim_{|x| \to + \infty} ( u - v_{\varepsilon})(x) \le 0$ 
and $u \le v_{\varepsilon}$ on $|x| = R + 1$ imply that
  $u \le v_{\varepsilon}$ in $ |x| > R + 1$.
From this result, we have $u \le \lim_{\varepsilon' \to 0}v_{\varepsilon'} =v$
 for almost every $x$ with $|x| > R + 1$ and some subsequence 
$\varepsilon'$.
Since $C_R |x|^{-1} \le v(x)$ for almost all $|x| > R+1$, it immediately follows that
\[
\int_{\mathbb{R}^3}v(r_N)^2 dx_N  = + \infty.
\]
  From the fact that $f \le [\psi]_{x_N} $ (recall the first line of the proof of Lemma \ref{calculus})
  and the Cauchy-Schwarz inequality, we conclude
  \begin{align*}
  +\infty &= \int_{\mathbb{R}^3} v(r_N)^2dx_N  \\
  &= \int_{\mathbb{R}^3}
 \left( \int_{\mathbb{R}^{3(N-1)}} 
  \phi(x_1, \dots, x_{N-1}) f(x_1, \dots, x_{N-1}, r_N)dx_1 \cdots dx_{N-1}   \right)^2dx_N  \\
  &\le \int_{\mathbb{R}^{3N}} f(x_1, \dots, x_{N-1}, r_N)^2 dx_1 \cdots dx_N
  \le \int_{\mathbb{R}^{3N}}[\psi]^2 dx_1 \cdots dx_N
  \le \|\psi\|_2^2.
  \end{align*}
  This contradicts our assumption $\psi \in L^2(\mathbb{R}^{3N})$.


\end{document}